\documentclass[journal,twoside,web]{ieeecolor}

\usepackage{generic}
\usepackage{cite}
\usepackage{amsmath,amssymb,amsfonts}
\usepackage{algorithmic}
\usepackage{graphicx}
\usepackage{textcomp}

\usepackage{graphicx}
\usepackage{subfigure}
\usepackage{epsfig} 
\usepackage{cite}
\usepackage{lcsys}

\usepackage{bm}

\usepackage{cleveref}
\crefformat{equation}{(#2#1#3)}
\Crefname{algocfline}{Algorithm}{Algorithms}
\Crefname{algocf}{line}{lines}
\Crefname{AlgoLine}{Line}{Lines}
\crefname{AlgoLine}{line}{lines}

\newtheorem{definition}{Definition}

\newtheorem{corollary}{Corollary}
\newtheorem{remark}{Remark}
\newtheorem{proposition}{Proposition}

\begin{document}
\bstctlcite{BSTcontrol}

\def\BibTeX{{\rm B\kern-.05em{\sc i\kern-.025em b}\kern-.08em
    T\kern-.1667em\lower.7ex\hbox{E}\kern-.125emX}}
\markboth{\journalname, VOL. XX, NO. XX, XXXX 2017}
{Author \MakeLowercase{\textit{et al.}}: Preparation of Papers for IEEE Control Systems Letters (August 2022)}

\title{Chance-Constrained Correlated Equilibria \\ for Robust Noncooperative Coordination}

\author{
Jaehan Im$^{a}$,
Ufuk Topcu$^{b}$, and
David Fridovich-Keil$^{a}$
\thanks{This work was supported by the NSF under grants 2336840 and 2211548, by NASA under ULI grants 80NSSC21M0071 and 80NSSC24M0070, and by ONR under grant N00014-22-1-2703.}
\thanks{$^{a}$ Department of Aerospace Engineering and Engineering Mechanics, The University of Texas at Austin; \texttt{jaehan.im@utexas.edu; dfk@utexas.edu}}
\thanks{$^{b}$  The Oden Institute for Computational Engineering and Sciences, The University of Texas at Austin; \texttt{utopcu@utexas.edu}}
}

\maketitle
\thispagestyle{empty}

\newcommand{\actSet}{\mathcal X}
\newcommand{\sys}{\text{sys}}

\newcommand{\ones}{\bm 1}
\newcommand{\reals}{{\mbox{\bf R}}}
\newcommand{\integers}{{\mbox{\bf Z}}}
\newcommand{\symm}{{\mbox{\bf S}}}  
\newcommand{\lag}{\mathcal{L}}

\newcommand{\nullspace}{{\mathcal N}}
\newcommand{\range}{{\mathcal R}}
\newcommand{\Rank}{\mathop{\bf Rank}}
\newcommand{\Tr}{\mathop{\bf Tr}}
\newcommand{\diag}{\mathop{\bf diag}}
\newcommand{\card}{\mathop{\bf card}}
\newcommand{\rank}{\mathop{\bf rank}}
\newcommand{\conv}{\mathop{\bf conv}}
\newcommand{\prox}{\bm{prox}}

\newcommand{\Expect}{\mathop{\bf E{}}}
\newcommand{\Prob}{\mathop{\bf Prob}}
\newcommand{\Co}{{\mathop {\bf Co}}} 
\newcommand{\dist}{\mathop{\bf dist{}}}
\newcommand{\argmin}{\mathop{\rm argmin}}
\newcommand{\argmax}{\mathop{\rm argmax}}
\newcommand{\epi}{\mathop{\bf epi}} 
\newcommand{\Vol}{\mathop{\bf vol}}
\newcommand{\dom}{\mathop{\bf dom}} 
\newcommand{\intr}{\mathop{\bf int}}
\newcommand{\sign}{\mathop{\bf sign}}
\newcommand{\norm}[1]{\left\lVert#1\right\rVert}
\newcommand{\mnorm}[1]{{\left\vert\kern-0.25ex\left\vert\kern-0.25ex\left\vert #1 
    \right\vert\kern-0.25ex\right\vert\kern-0.25ex\right\vert}}

\newcommand{\cf}{{\it cf.}}
\newcommand{\eg}{{\it e.g.}}
\newcommand{\ie}{{\it i.e.}}
\newcommand{\etc}{{\it etc.}}

\newcommand{\ba}[2][]{\todo[color=orange!40,size=\footnotesize,#1]{[BA] #2}}

\newcommand{\fix}[1]{\textcolor{red}{#1}}

\newcommand{\bigO}{\mathcal{O}}

\newcommand{\intSet}{\mathbb{Z}}
\newcommand{\realSet}{\mathbb{R}}
\newcommand{\natSet}{\mathbb{N}}
\newcommand{\zeroSet}{\bm{0}}
\newcommand{\state}{\bm{x}}
\newcommand{\cmdh}{\bar{\bm{u}}}
\newcommand{\cmda}{\mathring{\bm{u}}}
\newcommand{\cmd}{\bm{u}}
\newcommand{\costh}{J_h}
\newcommand{\costa}{J_a}
\newcommand{\observ}{\bm{z}}

\begin{abstract}
Correlated equilibria enable a coordinator to influence the self-interested agents by recommending actions that no player has an incentive to deviate from.
However, the effectiveness of this mechanism relies on accurate knowledge of the agents’ cost structures. 
When cost parameters are uncertain, the recommended actions may no longer be incentive compatible, allowing agents to benefit from deviating from them.
We study a chance-constrained correlated equilibrium problem formulation that accounts for uncertainty in agents' costs and guarantees incentive compatibility with a prescribed confidence level. 
We derive sensitivity results that quantify how uncertainty in individual incentive constraints affects the expected coordination outcome. 
In particular, the analysis characterizes the value of information by relating the marginal benefit of reducing uncertainty to the dual sensitivities of the incentive constraints, providing guidance on which sources of uncertainty should be prioritized for information acquisition. 
The results further reveal that increasing the confidence level is not always beneficial and can introduce a tradeoff between robustness and system efficiency. 
Numerical experiments demonstrate this tradeoff: CC-CE reduces realized coordination cost by up to 35\% at intermediate confidence levels, while the proposed information-gain metric consistently identifies effective uncertainty sources to reduce.
\end{abstract}

\begin{IEEEkeywords}
Game theory, Noncooperative coordination, Uncertain systems, Optimization
\end{IEEEkeywords}

\section{Introduction}
\IEEEPARstart{M}{ulti-agent} systems often consist of self-interested agents whose decisions collectively determine system-level outcomes. 
In such environments, agents optimize their own objectives independently, which often leads to inefficient outcomes at equilibrium. As a result, designing coordination mechanisms that improve system-level performance without requiring cooperation remains a central challenge in decentralized decision-making systems.

The correlated equilibrium concept \cite{aumann, aumann_2} provides a coordination mechanism that can improve outcomes in noncooperative games. In this mechanism, a coordinator samples a joint action from a probability distribution and privately recommends an action to each player. If the distribution satisfies the correlated equilibrium conditions, no player can reduce its cost by unilaterally deviating from the recommendation. This incentive-compatibility property enables a coordinator to influence the behavior of self-interested agents while respecting their autonomy. The correlated equilibrium concept has been widely studied as a coordination tool in various multi-agent systems \cite{rrce, ce_effective_traffic, ce_effective_experimental, j_vq, ce_error_2_exp}, as illustrated in the air mobility coordination example in \Cref{fig:concept}.

However, in many practical applications, the coordinator may not have accurate knowledge of the players' cost structures \cite{ce_error_1_belief, ce_error_2_exp, ce_error_3_inequity,ce_error_4_payoffSensitive}. 
Under such uncertainty, the incentive compatibility conditions derived from the nominal cost model may no longer hold. Consequently, players may have incentives to deviate from the recommended actions, which undermines the effectiveness of the correlated equilibrium.

\begin{figure}[t]
    \centering
    \includegraphics[width=0.99\linewidth]{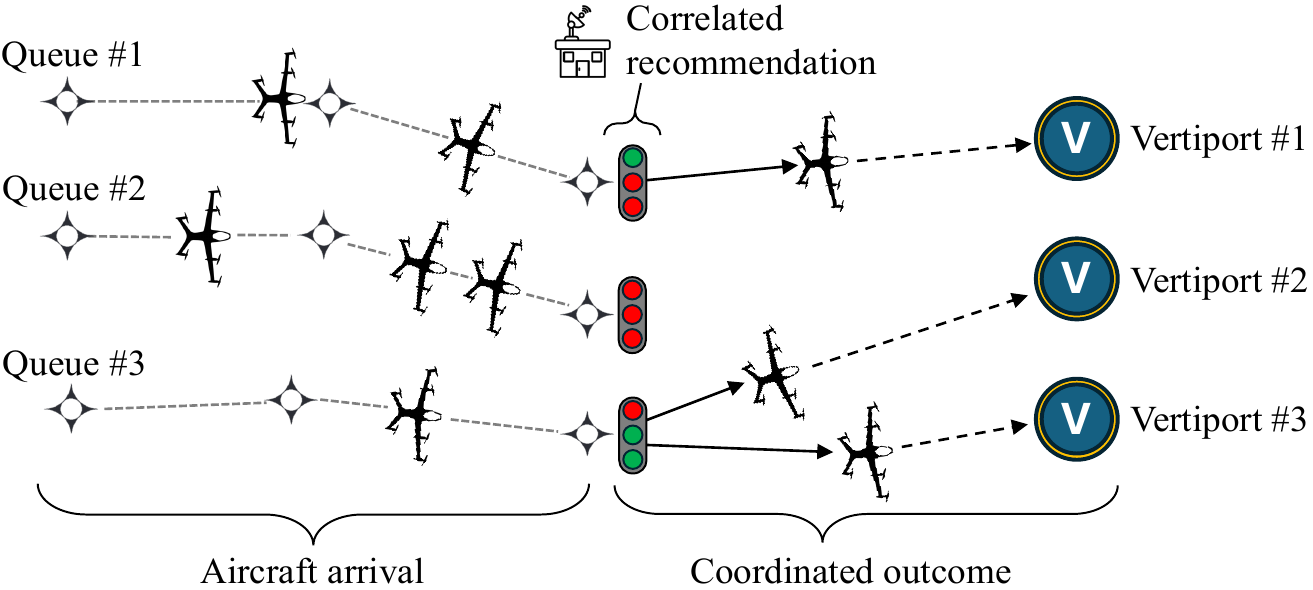}
    \caption{
    Illustration of the vertiport occupancy coordination scenario.
    Aircraft arrive through multiple approach queues, where each queue represents a self-interested agent that dispatches aircraft and determines which combination of vertiports to occupy or yield.
    A central coordinator broadcasts correlated recommendations that suggest vertiport usage while preserving agents' autonomy.
    The interaction forms a noncooperative coordination game in which agents may follow or deviate from the recommended actions.
    }
    \label{fig:concept}
\end{figure}

We study a \emph{chance-constrained correlated equilibrium} (CC-CE) that explicitly accounts for uncertainty in agents' costs.
In CC-CE, the incentive compatibility conditions are required to hold with a user-specified confidence level.
By incorporating uncertainty directly into the equilibrium constraints, CC-CE generates recommendations that remain incentive-compatible under cost uncertainty with quantifiable confidence.
Our recent work has explored the use of CC-CE for coordination under uncertainty \cite{j_vq}, primarily focusing on empirical performance. However, the theoretical role of uncertainty in shaping coordination outcomes and incentive constraints remains largely unexplored.

The contributions of this paper are threefold. First, we derive sensitivity results that quantify the value of information in individual incentive constraints and identify the constraints that act as information bottlenecks in coordination.
Second, we analyze how the confidence level used in CC-CE affects system performance, revealing a tradeoff between robustness and efficiency.
Third, we provide numerical experiments that illustrate how these theoretical insights explain the coordination behavior observed in practice.

\section{Related work}

The correlated equilibrium concept \cite{aumann, aumann_2} has been widely studied as a mechanism for coordinating noncooperative agents, as it allows a coordinator to recommend actions while preserving each agent's autonomy \cite{rrce, ce_effective_traffic, ce_effective_experimental, ce_error_2_exp, j_vq}. 
Correlated equilibria may also emerge through decentralized learning dynamics such as regret matching \cite{regret}, but we focus on equilibrium selection under explicit coordination objectives.
However, equilibrium recommendations can be sensitive to uncertainty in the underlying payoff structure, such as estimation errors or stochastic perturbations \cite{ce_effective_experimental, ce_effective_traffic}. 

Existing work addressing uncertainty in correlated equilibria includes Bayes correlated equilibria, which model payoff uncertainty through states and information structures and define incentive compatibility with respect to players' beliefs \cite{ce1,ce6,ce7}. 
Robust and distributionally robust formulations instead incorporate uncertainty directly into agents' cost functions and enforce guarantees over admissible uncertainty sets \cite{ce2,ce4,ce8,ce9}. 
While these approaches provide strong robustness, they can lead to nonconvex optimization problems and become conservative when uncertainty represents stochastic fluctuations around nominal costs \cite{ce8,ce9}.

Bounded-rationality models, such as quantal response equilibria and their correlated variants, introduce stochastic response rules and define equilibrium as a fixed point of noisy best responses \cite{qre1,qre2}. 
In contrast, our setting treats randomness as uncertainty in the deviation incentives themselves and imposes explicit reliability constraints on correlated-equilibrium recommendations.

\section{Preliminaries}

\subsection{Correlated equilibrium}

Consider a finite strategic game with agent set $N=\{1,\dots,n\}$. 
Each agent $i\in N$ chooses an action from a finite action set 
$X_i$. The joint action space is defined as
$\actSet := \prod_{i\in N} X_i$.
Let $z \in \Delta(\actSet)$ denote a joint probability distribution 
over action profiles. A correlated equilibrium can be interpreted 
as a coordination device that recommends actions to agents 
according to the distribution $z$. Upon receiving a recommendation, 
each agent decides whether to follow it or deviate to another action.

Let $J_i:\actSet \to \realSet$ denote the cost of agent $i$. 
We define the deviation cost for agent $i$ as
\begin{equation}\label{eq:devCost}
\Delta J_i(x_i,x_i',x_{-i})
:=
J_i(x_i,x_{-i})
-
J_i(x_i',x_{-i}),
\end{equation}
where $x_{-i} \in \prod_{j\neq i} X_j$ denotes the joint actions 
of all agents except $i$.
This quantity represents the cost change experienced by agent $i$ 
if the recommended action $x_i$ is replaced with a deviation $x_i'$ 
while other agents play $x_{-i}$.

Given a recommendation $x_i$,
we define the unconditional deviation margin as
\begin{equation}\label{eq:uncondDevMargin}
M_i(x_i,x_i';z)
:=
\sum_{x_{-i}} z(x_i,x_{-i})
\Delta J_i(x_i,x_i',x_{-i}).
\end{equation}
This quantity is the probability-weighted deviation margin, capturing whether the deviation is profitable on average under the recommendation distribution, and is linear in $z$.
We recall the classical definition of correlated equilibrium \cite{aumann,aumann_2}.

\begin{definition}[Correlated equilibrium]
A distribution $z \in \Delta(\actSet)$ is a correlated equilibrium if
\begin{equation} \label{eq:ceDef}
M_i(x_i,x_i';z)
\le 0,
\quad
\forall i,
\end{equation}
for all suggestions $x_i$, and for all deviations $x_i' \neq x_i$.
\end{definition}

Intuitively, once an agent receives a recommendation, no unilateral 
deviation yields a lower expected cost. Therefore following the 
recommendation is rational.

\subsection{Coordination via correlated equilibrium}

The objective of coordination is to select a feasible equilibrium that 
leads to a desirable system-level outcome. Let 
$J_\sys:\Delta(\actSet)\to\realSet$ denote a system-level cost function.

A correlated equilibrium specifies a probability distribution over 
joint actions, so we evaluate the system outcome in expectation. 
Consequently, many coordination objectives admit a 
linear combination of outcome costs with respect to the distribution $z$. 
For example, we can express the utilitarian system cost as
$J_\sys(z)=
\sum_{x\in\actSet} z(x)
\left(\sum_{i\in N} J_i(x)\right)$.

We formulate the coordination task as the following equilibrium 
selection problem
\begin{equation} \label{eq:coordProb}
\begin{aligned}
\min_{z \in \Delta(\actSet)} \quad & J_\sys(z) \\
\text{s.t.} \quad 
& z \text{ satisfies the constraints \Cref{eq:ceDef}}.
\end{aligned}
\end{equation}
Because both the objective and the constraints are linear in $z$, 
\Cref{eq:coordProb} becomes a linear program.

\section{Correlated equilibrium \\ under cost uncertainty}
We formalize the CC-CE coordination model that serves as the backbone for the subsequent sensitivity and information-acquisition analysis.

\subsection{Deviation uncertainty model}

In practical coordination settings, the coordinator may not know the 
exact deviation costs of agents due to modeling errors, operational 
variability, or incomplete information about agent preferences. 
To capture this  uncertainty, we introduce a stochastic 
model for the deviation cost.

Recall that $\Delta J_i(x_i,x_i',x_{-i})$ denotes the deviation cost defined in \Cref{eq:devCost}. From the coordinator's perspective, this quantity is not perfectly known and this uncertainty is modeled as
\begin{equation} \label{eq:noiseModel}
\Delta J_i(x_i,x_i',x_{-i})
=
\Delta \bar J_i(x_i,x_i',x_{-i})
+
\eta_i ,
\end{equation}
where $\Delta \bar J_i$ represents the nominal estimate of the 
deviation cost and $\eta_i$ is an agent-level uncertainty term. 
In this work we assume $\eta_i \sim \mathcal{N}(0,\sigma_i^2)$ for analytical tractability, where 
$\sigma_i \geq 0$ characterizes the uncertainty level associated with agent $i$.
The disturbance $\eta_i$ is assumed to be shared across all deviation-cost terms associated with agent $i$, i.e., across all $(x_i,x_i',x_{-i})$ combinations.
This captures agent-level modeling uncertainty that shifts all deviation margins for agent $i$ in a coherent way.

Substituting the noise model \Cref{eq:noiseModel} into the unconditional deviation margin \Cref{eq:uncondDevMargin}, for all $(x_i,x_i',x_{-i})$, yields
\begin{equation}
M_i(x_i,x_i';z)
=
\bar M_i(x_i,x_i';z)
+
\eta_i \pi_i(x_i),
\end{equation}
where $\bar M_i(x_i,x_i';z):=\sum_{x_{-i}} z(x_i,x_{-i})
\Delta \bar J_i(x_i,x_i',x_{-i})$ is the nominal deviation margin and $\pi_i(x_i) := \sum_{x_{-i}} z(x_i,x_{-i})$ is the recommendation probability
of action $x_i$.
Thus even if the nominal deviation margin is negative, 
uncertainty in the deviation cost may cause the realized margin to become positive, leading agents to deviate from the recommended action, cf. \Cref{eq:ceDef}.

\subsection{Chance-constrained correlated equilibrium}

To ensure robust coordination under uncertainty, we require the 
condition in \Cref{eq:ceDef} to hold with high confidence.
Building on our prior uncertainty-aware correlated-equilibrium study~\cite{j_vq}, we present a rigorous CC-CE formulation used throughout the remainder of the paper.

\begin{definition}[Chance-constrained correlated equilibrium]
For confidence level $\alpha \in (0,1)$, a distribution 
$z \in \Delta(\actSet)$ is a chance-constrained correlated equilibrium (CC-CE) if
\begin{equation} \label{eq:ccceDef}
\mathbb{P}
\Big(
\bar M_i(x_i,x_i';z)
+ \eta_i \pi_i(x_i) \le 0
\Big)
\ge \alpha
\quad
\forall i,
\end{equation}
for all recommendations $x_i$, and all deviations $x_i' \neq x_i$.
\end{definition}
\noindent
Each condition in \Cref{eq:ccceDef}, which we refer to as a \emph{deviation constraint} throughout the remainder of the paper, guarantees that deviations remain unprofitable with probability at least $\alpha$.

Under the noise model \Cref{eq:noiseModel} with $\eta_i \sim \mathcal{N}(0, \sigma_i^2)$, the deviation constraint \Cref{eq:ccceDef} admits 
a deterministic equivalent obtained via the standard Gaussian chance-constraint reformulation~\cite{deterministic}. Indexing each deviation constraint by $c=(i,x_i,x_i')$, and defining
$\pi_c := \pi_i(x_i)$, we write
\begin{equation}\label{eq:ccce_equiv}
\underbrace{
\bar M_i(x_i,x_i';z)
}_{=:\,m_c(z)}
+
\underbrace{\Phi^{-1}(\alpha)}_{=:\,q(\alpha)}\sigma_{i(c)} \pi_c
\le 0 ,
\end{equation}
where $i(c)$ denotes the agent associated with constraint $c$, and $\Phi^{-1}$ denotes the inverse cumulative distribution function of the normal distribution.
The term $m_c(z)$ denotes the nominal deviation margin, while
$q(\alpha)\sigma_{i(c)}\pi_c$ is the recommendation-weighted uncertainty margin.
This implies that the impact of uncertainty scales with the likelihood of the recommendation.
Since both $m_c(z)$ and $\pi_c$ are linear in $z$ for fixed $\alpha$ and $\sigma_{i(c)}$, each deviation constraint remains linear in $z$.

We obtain the uncertainty-aware coordination problem by replacing the nominal deviation constraints in \Cref{eq:ceDef} with the CC-CE deviation constraints \Cref{eq:ccce_equiv}. 
The resulting problem remains a linear program and serves as the foundation for the following analysis.

\section{Solution geometry and sensitivity analysis}

We analyze the feasible-set geometry of CC-CE and its sensitivity to uncertainty parameters.
The goal is to identify which deviation constraints in \Cref{eq:ccceDef} act as \emph{bottlenecks}, i.e., active deviation constraints that significantly restrict the achievable optimal coordination performance, and how reducing uncertainty can improve the outcome.
Concretely, we ask:
(i) which constraints become active, (ii) whether their restriction is more strongly shaped by the nominal game structure or by uncertainty in cost information, and (iii) which uncertainty sources are most valuable to reduce.

\begin{figure}[t!]
    \vspace{2mm}
    \centering
    \includegraphics[width=0.95\linewidth]{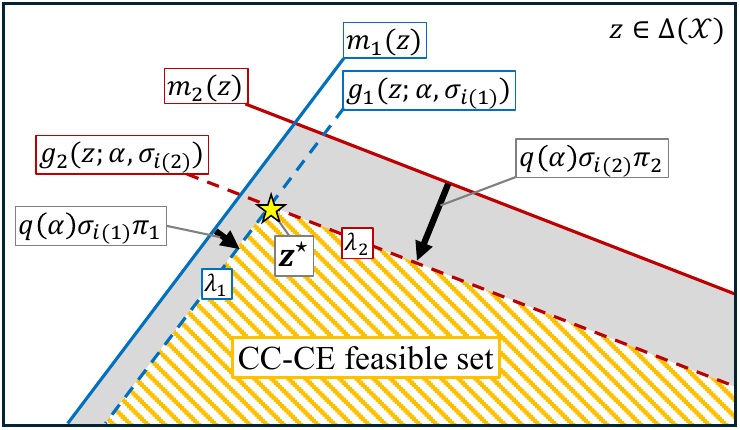}
    \caption{
    Conceptual illustration of CC-CE feasible-set contraction in the decision space $z \in \Delta(\mathcal{X})$ for two constraints $c \in \{1,2\}$. 
    Uncertainty contracts the nominal deviation constraints $m_c(z)$ by the weighted margin $q(\alpha)\sigma_{i(c)}\pi_c$, yielding the effective chance-constrained boundaries $g_c(z;\alpha,\sigma_{i(c)})$. 
    Constraint \(c=1\) (blue) has a relatively small uncertainty contribution, while constraint \(c=2\) (red) has a larger relative contribution from uncertainty-induced contraction.
    In this illustration, the optimal solution $z^\ast$ lies at the intersection where both constraints are active, with dual sensitivities $\lambda_1$ and $\lambda_2$. 
    This highlights that the value of reducing uncertainty depends jointly on the dual sensitivity, the uncertainty level, and recommendation likelihood.}
    \label{fig:geometry}
\end{figure}

\subsection{Bottleneck origin analysis} \label{sec:originAnalysis}

Recall that the CC-CE coordination problem is subject to
\begin{equation} \label{eq:constraints}
g_c(z;\alpha,\sigma_{i(c)})
=
m_c(z)
+
q(\alpha)\sigma_{i(c)}\pi_c
\le 0,
\quad \forall c,
\end{equation}
where $c=(i,x_i,x_i')$ indexes a deviation constraint and $i(c)$ denotes its associated agent.
Geometrically, the deviation constraints \Cref{eq:ccceDef} contract the feasible region defined by the nominal deviation constraints in \Cref{eq:ceDef} by the recommendation-weighted uncertainty margin $q(\alpha)\sigma_{i(c)}\pi_c$.
This contraction can make some deviation constraints active at the optimal coordination solution and thereby limit the achievable system performance.

Let $z^\star$ solve the CC-CE problem, and define the active set
$\mathcal{A}:=\{c \mid g_c(z^\star;\alpha,\sigma_{i(c)})=0\}$.
For $c\in\mathcal A$, the condition
$m_c(z^\star)+q(\alpha)\sigma_{i(c)}\pi_c=0$
shows how the active constraint balances the nominal deviation margin and the uncertainty-induced contraction, as illustrated in \Cref{fig:geometry}.
The margin $q(\alpha)\sigma_{i(c)}\pi_c$ indicates how strongly uncertainty contributes to the restriction imposed by the active constraint.
Large margins indicate that uncertainty in cost information contributes strongly to the active constraint, whereas small margins suggest that the restriction is more strongly shaped by the underlying nominal game structure.

\subsection{Information benefit: $\sigma$--sensitivity analysis} \label{sec:sigmaSensitivity}

While the previous subsection qualitatively interprets how nominal game structure and uncertainty contribute to active constraints, it does not quantify the benefit of reducing uncertainty.
We now analyze how the optimal system outcome changes with respect to the uncertainty levels $\sigma_i$.

\begin{proposition}[Local $\sigma$--sensitivity] \label{prop:sigma_sensitivity}
Let $J^\star_{\sys}(\sigma)$ denote the optimal system cost of the CC-CE coordination 
problem for a fixed confidence level $\alpha$, and let $\lambda_c^\star$ 
denote the optimal dual variable associated with constraint $c$. 
Then, locally around a parameter regime with fixed active constraints,
\begin{equation}
\textstyle{\frac{\partial J_{\sys}^\star}{\partial \sigma_i}
=
q(\alpha)\Lambda_i, \quad \forall i\in N},
\end{equation}
where
$\Lambda_i := \sum_{c\in C_i}\lambda_c^\star \pi_c$,
and $C_i$ denotes the set of deviation constraints associated with agent $i$.
\end{proposition}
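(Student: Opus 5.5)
The plan is to treat the CC-CE coordination problem as a parametric linear program in the parameter vector $\sigma=(\sigma_1,\dots,\sigma_n)$ and apply the envelope theorem for linear programs. I first write the Lagrangian
\[
\mathcal L(z,\lambda,\nu,\mu;\sigma)=J_\sys(z)+\sum_c \lambda_c\bigl(m_c(z)+q(\alpha)\sigma_{i(c)}\pi_c(z)\bigr)+\nu\Bigl(\textstyle\sum_x z(x)-1\Bigr)-\sum_x \mu_x z(x).
\]
Here the simplex constraints $z\in\Delta(\actSet)$ are dualized with multipliers $\nu$ and $\mu\ge 0$. These constraints do not depend on $\sigma$. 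The key observation is that $\sigma_i$ enters only the constraints $c\in C_i$, and it enters them linearly through the coefficient $q(\alpha)\sigma_i$ multiplying the linear function $\pi_c(z)=\sum_{x_{-i}}z(x_i,x_{-i})$. Hence
\[
\frac{\partial \mathcal L}{\partial \sigma_i}(z,\lambda)=q(\alpha)\sum_{c\in C_i}\lambda_c\pi_c(z).
\]

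The second step is to justify differentiating the optimal value. I fix a parameter regime in which the active set $\mathcal A$ and the optimal basis are unchanged; this is exactly the hypothesis "locally around a parameter regime with fixed active constraints." I also assume the primal and dual optima are unique, i.e.\ nondegeneracy, which I will state explicitly. Under this hypothesis, $z^\star(\sigma)$ is obtained by solving the linear system of active constraints together with the simplex equality and the active nonnegativity bounds. Since the coefficient matrix depends affinely and hence smoothly on $\sigma$, $z^\star(\sigma)$ is differentiable, and $\lambda^\star,\nu^\star,\mu^\star$ remain fixed in sign and support.

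I then differentiate $J^\star_\sys(\sigma)=J_\sys(z^\star(\sigma))$ by the chain rule and use KKT stationarity,
\[
\nabla J_\sys+\sum_c\lambda_c^\star\nabla g_c+\nu^\star\ones-\mu^\star=0,
\]
to rewrite $\nabla J_\sys\cdot\partial_{\sigma_i}z^\star$. Three facts remove the unwanted terms:
\begin{itemize}
\item Differentiating the active identities $g_c(z^\star(\sigma);\alpha,\sigma_{i(c)})=0$ for $c\in\mathcal A$ gives $\nabla g_c\cdot\partial_{\sigma_i}z^\star=-q(\alpha)\pi_c\,\mathbb 1[c\in C_i]$.
\item For inactive constraints, $\lambda_c^\star=0$ by complementary slackness.
\item The simplex terms vanish: $\ones\cdot\partial z^\star=0$, and $\mu_x^\star\partial z^\star(x)=0$ because the support of $z^\star$ is fixed.
\end{itemize}
Combining these yields $\partial J^\star_\sys/\partial\sigma_i=q(\alpha)\sum_{c\in C_i}\lambda_c^\star\pi_c$, evaluated at $z^\star$, which is $q(\alpha)\Lambda_i$.

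The main obstacle is rigor in the differentiability step. An LP value function is only piecewise linear in the constraint data, and here $\sigma$ perturbs the constraint matrix rather than the right-hand side, so the standard right-hand-side sensitivity result does not apply verbatim. I will handle this first by the fixed-basis argument: with a nonsingular basis matrix $B(\sigma)$ affine in $\sigma$, the map $\sigma\mapsto B(\sigma)^{-1}$ is smooth. If degeneracy occurs, I will fall back to stating the result as a one-sided or directional derivative, using the saddle-point envelope (Danskin-type) argument in which the derivative equals $\partial_{\sigma_i}\mathcal L$ evaluated at the optimal primal–dual pair.
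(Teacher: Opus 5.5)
Your proof is correct, and it takes a more careful route than the paper's. The paper writes the constraints as $m_c(z)+b_c(\sigma)\le 0$ with $b_c(\sigma)=q(\alpha)\sigma_{i(c)}\pi_c$. It then invokes right-hand-side LP sensitivity, $\partial J^\star_\sys/\partial b_c=\lambda_c^\star$, and applies the chain rule with $\partial b_c/\partial\sigma_i=q(\alpha)\pi_c\bm{1}\{i(c)=i\}$. This implicitly treats $\pi_c$ as a constant, although $\pi_c=\sum_{x_{-i}}z(x_i,x_{-i})$ depends on the decision variable. As you observe, $\sigma$ therefore perturbs the constraint matrix, not the right-hand side.

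Your fixed-basis/KKT computation supplies the missing step. Differentiating the active identities and using stationarity plus complementary slackness shows that the first-order effect of the induced change in $z^\star$ cancels. The derivative is therefore $\partial_{\sigma_i}\mathcal L$ at the optimal primal--dual pair, namely $q(\alpha)\sum_{c\in C_i}\lambda_c^\star\pi_c(z^\star)$, which confirms the paper's formula a posteriori.

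The paper's version buys brevity. Yours buys rigor and makes the regularity hypothesis explicit: unique primal and dual optima. For an LP this amounts to a unique optimal basis that is both primal- and dual-nondegenerate, and it already guarantees the locally fixed active set the statement assumes. Your argument also applies verbatim to any affine parameter dependence of the constraint coefficients.

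Your Danskin-type fallback is not needed for the statement as posed. If you keep it, note that directional derivatives under constraint-matrix perturbations generally need extra conditions, such as bounded primal and dual optimal sets.
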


\begin{proof}
For fixed $\alpha$, write the CC-CE constraints as
\begin{equation}
\textstyle{\min_{z\in\Delta(\actSet)} J_\sys(z)}
\quad
\text{s.t.}
\quad
m_c(z)+b_c(\sigma)\le 0,\ \forall c,
\end{equation}
where $b_c(\sigma)=q(\alpha)\sigma_{i(c)}\pi_c$.
Under a locally fixed active set, standard LP sensitivity results \cite{boyd} give
\begin{equation}
\textstyle{\frac{\partial J^\star_{\sys}}{\partial b_c} = \lambda_c^\star}.
\end{equation}
Applying the chain rule,
\begin{equation}
\begin{aligned}
\textstyle{\frac{\partial J_{\sys}^\star}{\partial \sigma_i}}
&=
\textstyle{\sum_c
\lambda_c^\star
q(\alpha)\pi_c\bm{1}\{i(c)=i\}}\\
&=
\textstyle{
q(\alpha)\sum_{c\in C_i}
\lambda_c^\star \pi_c}
=
q(\alpha)\Lambda_i .
\end{aligned}
\end{equation}
\end{proof}

\noindent
This result shows that, locally, reducing uncertainty for agent $i$ improves performance in proportion to the aggregate dual sensitivity of its associated deviation constraints.

\begin{corollary}[Constraint-level sensitivity] \label{cor:sigmaSensitivity}
The sensitivity of the optimal coordination outcome with respect to the uncertainty of constraint $c$ is
\begin{equation} \label{eq:sigmaSensitivity}
\textstyle{\frac{\partial J_{\sys}^\star}{\partial \sigma_{i(c)}}
=
q(\alpha)\lambda_c^\star \pi_c}.
\end{equation}
\begin{proof}
Immediate from \Cref{prop:sigma_sensitivity}.
\end{proof}
\end{corollary}

Thus, $\Lambda_i$ quantifies the aggregate value of reducing uncertainty
for agent $i$, whereas $\lambda_c^\star \pi_c$ captures the contribution of
an individual constraint.

\subsection{Information acquisition prioritization} \label{sec:informationTarget}

The marginal improvement in the optimal coordination outcome obtained by relaxing a constraint is proportional to the corresponding dual multiplier $\lambda_c$, as shown in \Cref{prop:sigma_sensitivity}.
Classically, $\lambda_c$ can be interpreted as the ``shadow price'' associated with constraint $c$.
However, this quantity alone does not determine which uncertainty source should be prioritized for information acquisition.

As shown in \Cref{eq:constraints}, each constraint margin consists of a structural component $m_c(z^\star)$ and a recommendation-weighted uncertainty component $q(\alpha)\sigma_{i(c)}\pi_c$.
Thus, even highly sensitive constraints may provide limited benefit from additional information if their uncertainty contribution is negligible.
Conversely, large uncertainty alone may have little effect when the corresponding constraint has low sensitivity.

To quantify the value of information, we combine the sensitivity result in \Cref{sec:sigmaSensitivity} with the uncertainty contribution discussed in \Cref{sec:originAnalysis}.
A first-order approximation of the improvement from eliminating $\sigma_{i(c)}$ is
\begin{equation}
\textstyle{
\Delta J_{\sys}^\star
\approx
-\sigma_{i(c)}
\frac{\partial J_{\sys}^\star}{\partial \sigma_{i(c)}}
=
-q(\alpha)\sigma_{i(c)}\lambda_c^\star \pi_c.}
\end{equation}
Since $q(\alpha)$ is constant for fixed confidence level, we approximate the potential information benefit of constraint $c$ by
\begin{equation} \label{eq:infogain}
\texttt{InfoGain}_c := \sigma_{i(c)} \lambda_c^\star \pi_c.
\end{equation}

This expression provides a first-order approximation of the marginal coordination benefit obtained by reducing uncertainty in constraint $c$.
Accordingly, effective information acquisition must jointly consider uncertainty magnitude, dual sensitivity, and recommendation probability.
If any of these terms is negligible, reducing uncertainty in that constraint may provide little coordination benefit.

\subsection{Confidence benefit: $\alpha$--$J_{\sys}$ tradeoff}

Finally, we view the confidence level $\alpha$ as a design knob.
Increasing $\alpha$ reduces deviation risk, but also tightens the CC-CE constraints and may increase coordination cost.

\begin{proposition}[Local $\alpha$ sensitivity] \label{prop:alpha_sensitivity}
Let $J_{\sys}^\star(\alpha)$ denote the optimal value of the CC-CE coordination
problem for fixed uncertainty levels $\{\sigma_i\}_{i\in N}$. Under Gaussian
uncertainty, the sensitivity of the optimal system cost locally around a fixed active set with respect to $\alpha$ is
\begin{equation}\label{eq:alpha_sensitivity}
\textstyle{
\frac{dJ_{\sys}^\star}{d\alpha}
=
\frac{1}{\phi(q(\alpha))}
\sum_i \Lambda_i \sigma_i},
\end{equation}
where $q(\alpha)=\Phi^{-1}(\alpha)$, $\phi$ is the standard normal density, and
$\Lambda_i= \sum_{c\in C_i}\lambda_c^\star \pi_c$ as defined in \Cref{prop:sigma_sensitivity}.
\end{proposition}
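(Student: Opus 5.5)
The plan is to mirror the argument of \Cref{prop:sigma_sensitivity}, treating $\alpha$ as the only varying parameter and keeping every $\sigma_i$ fixed. Writing $b_c(\alpha)=q(\alpha)\sigma_{i(c)}\pi_c$, the CC-CE coordination problem becomes
\begin{equation*}
\textstyle{\min_{z\in\Delta(\actSet)} J_\sys(z)} \quad \text{s.t.} \quad m_c(z)+b_c(\alpha)\le 0,\ \forall c.
\end{equation*}
Exactly as before, $\alpha$ enters only through the scalar $q(\alpha)$, which multiplies the recommendation-weighted uncertainty terms. Locally around a fixed active set we therefore invoke the same LP sensitivity fact used in \Cref{prop:sigma_sensitivity}: the derivative of $J^\star_\sys$ with respect to a perturbation of constraint $c$ is $\lambda_c^\star$. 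This is interpreted in the same way as in that proposition, with $\pi_c$ evaluated at $z^\star$ and held fixed to first order (an envelope-type argument).

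The key steps, in order, are as follows.

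\textbf{Step 1.} Apply the chain rule through $q$:
\begin{equation*}
\textstyle{\frac{dJ^\star_\sys}{d\alpha}=\sum_c \lambda_c^\star \frac{\partial b_c}{\partial \alpha}=q'(\alpha)\sum_c\lambda_c^\star\sigma_{i(c)}\pi_c .}
\end{equation*}

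\textbf{Step 2.} Regroup the sum over constraints by agent, using the partition $\{C_i\}_{i\in N}$ of the constraint set. This gives
\begin{equation*}
\textstyle{\sum_c\lambda_c^\star\sigma_{i(c)}\pi_c=\sum_i \sigma_i\sum_{c\in C_i}\lambda_c^\star\pi_c=\sum_i\sigma_i\Lambda_i .}
\end{equation*}
Alternatively, Steps 1 and 2 can be obtained in one stroke by observing that $J^\star_\sys$ depends on $(\alpha,\sigma)$ only through the products $q(\alpha)\sigma_i$. Then
\begin{equation*}
\textstyle{\frac{dJ^\star_\sys}{d\alpha}=\sum_i \frac{\partial J^\star_\sys}{\partial(q\sigma_i)}\,\sigma_i\, q'(\alpha)},
\end{equation*}
and \Cref{prop:sigma_sensitivity} gives $\partial J^\star_\sys/\partial(q\sigma_i)=\Lambda_i$.

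\textbf{Step 3.} Compute $q'(\alpha)$ by the inverse function theorem. Since $\Phi(q(\alpha))=\alpha$ and $\Phi'=\phi>0$, differentiating yields $\phi(q(\alpha))\,q'(\alpha)=1$, so $q'(\alpha)=1/\phi(q(\alpha))$. This is well defined for all $\alpha\in(0,1)$. Substituting into Step 1 gives \Cref{eq:alpha_sensitivity}.

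The main obstacle is justifying the LP sensitivity step, which is inherited rather than new. The right-hand-side perturbation $q(\alpha)\sigma_{i(c)}\pi_c$ actually depends on $z$, so strictly it perturbs the constraint matrix rather than the right-hand side. I would handle this as \Cref{prop:sigma_sensitivity} implicitly does. Under the fixed-active-set (nondegenerate, unique primal and dual) assumption, the Lagrangian envelope theorem gives
\begin{equation*}
\textstyle{\frac{dJ^\star}{d\theta}=\sum_c\lambda_c^\star\,\partial_\theta g_c(z^\star;\theta)},
\end{equation*}
and here $\partial_\alpha g_c = q'(\alpha)\sigma_{i(c)}\pi_c(z^\star)$. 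This yields the same expression without freezing $\pi_c$ artificially. Everything else is routine algebra.
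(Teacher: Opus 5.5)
Your proposal is correct and follows the paper's proof essentially step for step. You apply the chain rule through $q(\alpha)$ using the constraint-level dual sensitivities $\lambda_c^\star$, regroup the constraints by agent to obtain $\sum_i \sigma_i\Lambda_i$, and use $q'(\alpha)=1/\phi(q(\alpha))$. Your envelope-theorem remark, which differentiates $g_c(z^\star;\alpha)$ with $\pi_c$ evaluated at $z^\star$ under unique primal and dual solutions, usefully justifies a step the paper leaves implicit, since the margin $q(\alpha)\sigma_{i(c)}\pi_c$ actually perturbs the constraint matrix rather than the right-hand side.
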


\begin{proof}
Recall that the CC-CE margin is
$b_c(\alpha,\sigma)=q(\alpha)\sigma_{i(c)}\pi_c$.
Using the local sensitivity result from \Cref{prop:sigma_sensitivity} and applying the chain rule,
\begin{equation}
\textstyle{
\frac{dJ_{\sys}^\star}{d\alpha}
=
\sum_c
\lambda_c^\star
\sigma_{i(c)}\pi_c
\frac{dq(\alpha)}{d\alpha}}.
\end{equation}
For Gaussian uncertainty,
\(
dq(\alpha)/d\alpha
=
1/\phi(q(\alpha))
\)
\cite{quantile}.
Grouping constraints by agent yields the result.
\end{proof}

\begin{corollary}[$\alpha$ sensitivity as information gain]
\label{cor:alpha_infogain}
Under Gaussian uncertainty,
\begin{equation} \label{eq:alpha_infogain}
\textstyle{
\frac{dJ_{\sys}^\star}{d\alpha}
=
\frac{1}{\phi(q(\alpha))}
\sum_c \texttt{InfoGain}_c},
\end{equation}
where $\texttt{InfoGain}_c$ is defined in \Cref{eq:infogain}.
\end{corollary}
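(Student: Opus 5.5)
The plan is to start from the expression already established in \Cref{prop:alpha_sensitivity} and regroup its sum, so no new sensitivity analysis is needed. By \Cref{prop:alpha_sensitivity}, locally around a fixed active set,
\begin{equation*}
\textstyle{
\frac{dJ_{\sys}^\star}{d\alpha}
=
\frac{1}{\phi(q(\alpha))}
\sum_i \Lambda_i \sigma_i,
\qquad
\Lambda_i=\sum_{c\in C_i}\lambda_c^\star \pi_c .}
\end{equation*}

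The first step is to substitute the definition of $\Lambda_i$ and move $\sigma_i$ inside the inner sum. This gives
\begin{equation*}
\textstyle{
\sum_i \sigma_i\Lambda_i=\sum_i\sum_{c\in C_i}\sigma_i\lambda_c^\star\pi_c .}
\end{equation*}
For every $c\in C_i$ we have $i(c)=i$, so we may replace $\sigma_i$ by $\sigma_{i(c)}$ term by term.

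The second step is to merge the double sum into a single sum over all constraints. For this we need the sets $\{C_i\}_{i\in N}$ to partition the full constraint index set. They do: each $c=(i,x_i,x_i')$ carries exactly one agent index $i(c)$, so it belongs to exactly one $C_i$. Hence
\begin{equation*}
\textstyle{
\sum_i\sum_{c\in C_i}\sigma_{i(c)}\lambda_c^\star\pi_c=\sum_c \sigma_{i(c)}\lambda_c^\star\pi_c=\sum_c \texttt{InfoGain}_c ,}
\end{equation*}
where the last equality is the definition in \Cref{eq:infogain}. Multiplying through by $1/\phi(q(\alpha))$ gives \Cref{eq:alpha_infogain}.

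There is no real obstacle. The only point to handle carefully is interpretation: $\lambda_c^\star$ and $\pi_c$ must be evaluated at the same local optimum $z^\star$ and active set used in \Cref{prop:alpha_sensitivity}. Inactive constraints have $\lambda_c^\star=0$ and contribute nothing, so including them in the sum is harmless. With that consistency noted, the corollary is an algebraic regrouping of \Cref{prop:alpha_sensitivity}. Equivalently, one can read it term by term from the summand-level identity $\lambda_c^\star\sigma_{i(c)}\pi_c\,dq/d\alpha$ in its proof.
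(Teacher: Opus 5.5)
Your proposal is correct and matches the paper's implicit argument. The paper gives no separate proof: the corollary is read off from the per-constraint sum $\sum_c \lambda_c^\star \sigma_{i(c)}\pi_c\, dq(\alpha)/d\alpha$ in the proof of \Cref{prop:alpha_sensitivity}, with each summand recognized as $\texttt{InfoGain}_c$, and your check that the sets $C_i$ partition the constraints is exactly what justifies moving between that form and the agent-grouped one.
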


\Cref{prop:alpha_sensitivity} and \Cref{cor:alpha_infogain} provide a local fixed active-set characterization showing that the $\alpha$-sensitivity of the optimal coordination cost scales with aggregate information gain.
Increasing $\alpha$ enlarges the recommendation-weighted uncertainty margins, shrinking the feasible coordination region and increasing coordination cost.
As $\alpha \to 1$, the factor $1/\phi(q(\alpha))$ grows rapidly, indicating high sensitivity under conservative confidence levels.

\begin{remark} [Confidence-level tradeoff] \label{rem:alpha_tradeoff}
Increasing $\alpha$ also reduces the probability of deviation.
To provide intuition regarding this tradeoff, consider the simplified effective cost
\begin{equation}
J_{\text{eff}}(\alpha)
=
J_{\sys}(\alpha)
+
(1-\alpha)C_{\text{dev}},
\end{equation}
where $C_{\text{dev}}$ denotes a constant surrogate penalty for deviation events.
Although deviation losses may generally depend on $z^\star(\alpha)$, we use a constant surrogate to illustrate the confidence-level tradeoff: the nominal outcome $J_{\sys}(\alpha)$ is realized when agents do not deviate, while deviation events occur with approximate residual probability $(1-\alpha)$ and incur penalty $C_{\text{dev}}$.
Using \Cref{eq:alpha_infogain}, a stationary tradeoff point satisfies
\begin{equation}\label{eq:alphaOpt}
\textstyle{
\frac{1}{\phi(q(\alpha))}
\sum_c \texttt{InfoGain}_c - C_{\text{dev}}}
= 0.
\end{equation}
Under this simplified surrogate model, increasing $\alpha$ has two competing effects: it reduces deviation risk through $(1-\alpha)C_{\text{dev}}$, but shrinks the feasible coordination region through the recommendation-weighted uncertainty margins.
Thus, increasing the confidence level does not always improve overall coordination performance.
\end{remark}

\section{Numerical experiment}

\subsection{Vertiport occupancy scenario}

We consider a vertiport occupancy coordination scenario for air mobility, illustrated in \Cref{fig:concept}.
Aircraft arrive through multiple queues, and each queue is modeled as a self-interested agent that determines which vertiports to occupy.
For each vertiport, the agent chooses whether to occupy or yield.

Let $x_i \in \{0,1\}^m$ denote the occupancy decision of agent $i$, where $m$ is the number of vertiports.
Here, $x_{iv}=1$ indicates that agent $i$ attempts to occupy vertiport $v$, and $x_{iv}=0$ indicates that it yields.
Thus, each agent has $2^m$ possible actions.
Let $N_v(x)=\sum_{i=1}^n x_{iv}$ denote the number of agents occupying vertiport $v$.
The nominal cost of each agent is
\begin{equation}
\small{
\textstyle{
\bar J_i(x)
=
5\gamma \sum_{v=1}^{m} x_{iv}(N_v(x)-1)
+
5 \sum_{v=1}^{m} (1-x_{iv})}},
\end{equation}
where $\gamma \in \realSet_{\geq1}$ controls the congestion penalty.
The first term penalizes simultaneous occupancy, while the second term penalizes yielding.

The coordinator selects a recommendation distribution $z$ over joint actions to minimize the aggregate expected delay
\begin{equation} \label{eq:scenario_jsys}
\textstyle{
J_{\sys}(z)
=
\mathbb{E}_{x\sim z}\!\left[\sum_{i=1}^{n} \bar J_i(x)\right]}.
\end{equation}

Uncertainty in the incentive constraints is modeled through Gaussian perturbations in the deviation costs, following \Cref{eq:noiseModel}.
Specifically, each agent's deviation cost $\Delta J_i(\cdot)$ includes an additive noise term $\eta_i \sim \mathcal{N}(0,\sigma_i^2)$.
The Gaussian uncertainty model represents aggregated operational uncertainty in practical air mobility coordination settings, including estimated time of arrival prediction errors, congestion-estimation uncertainty, and operational variability \cite{Gaussian_time, Gaussian_traj}.

\subsection{Baselines and evaluation metrics}

We compare \texttt{CC-CE} against \texttt{Naive-CE}, which solves the nominal correlated-equilibrium coordination problem in \Cref{eq:coordProb} without accounting for cost uncertainty.

We evaluate performance using three metrics: \textit{nominal system cost}, the value of \(J_{\sys}\) in \Cref{eq:scenario_jsys} under the recommendation distribution; \textit{realized system cost}, the aggregate delay after agents respond under realized perturbations; and \textit{deviation rate}, the fraction of sampled perturbation realizations with at least one profitable deviation.

\subsection{Experiment setup}

We conduct Monte Carlo experiments\footnote{The experiments were implemented in Julia v1.11.3.
We used \texttt{ParametricMCP} \cite{parammcp} and \texttt{PATHSolver} \cite{PATH} with default convergence tolerances to solve \Cref{eq:coordProb}. Code available at https://github.com/CLeARoboticsLab/CCCE} using 100 independent trials.
In each trial, we sample $\sigma_i \sim U(0,1.6)$ minutes for each agent.
For each computed recommendation distribution, we evaluate realized performance using 100 sampled realizations of the cost perturbations.

We consider $n=3$ queues and $m=2$ vertiports, so each agent has $2^m=4$ actions and the joint action space contains $(2^m)^n=64$ profiles.
We first compare \texttt{Naive-CE} with \texttt{CC-CE} across $\alpha \in \{0.66,0.8,0.9,0.95,0.99,0.999\}$ under $\gamma=2.0$.
This evaluates how confidence affects nominal cost, realized cost, and deviation rate.

We then evaluate information-acquisition strategies from \Cref{sec:informationTarget}.
After solving baseline \texttt{CC-CE}, we remove the uncertainty margins from five selected deviation constraints and resolve the coordination problem.
We compare random selection, $\sigma$-only selection, $\lambda$-only selection, and \texttt{InfoGain} selection using \Cref{eq:infogain}.
We repeat this experiment across multiple $\gamma$ values and report nominal cost normalized by the baseline \texttt{CC-CE} solution.
\begin{figure}[hbt!]
    \centering    \includegraphics[width=0.99\linewidth]{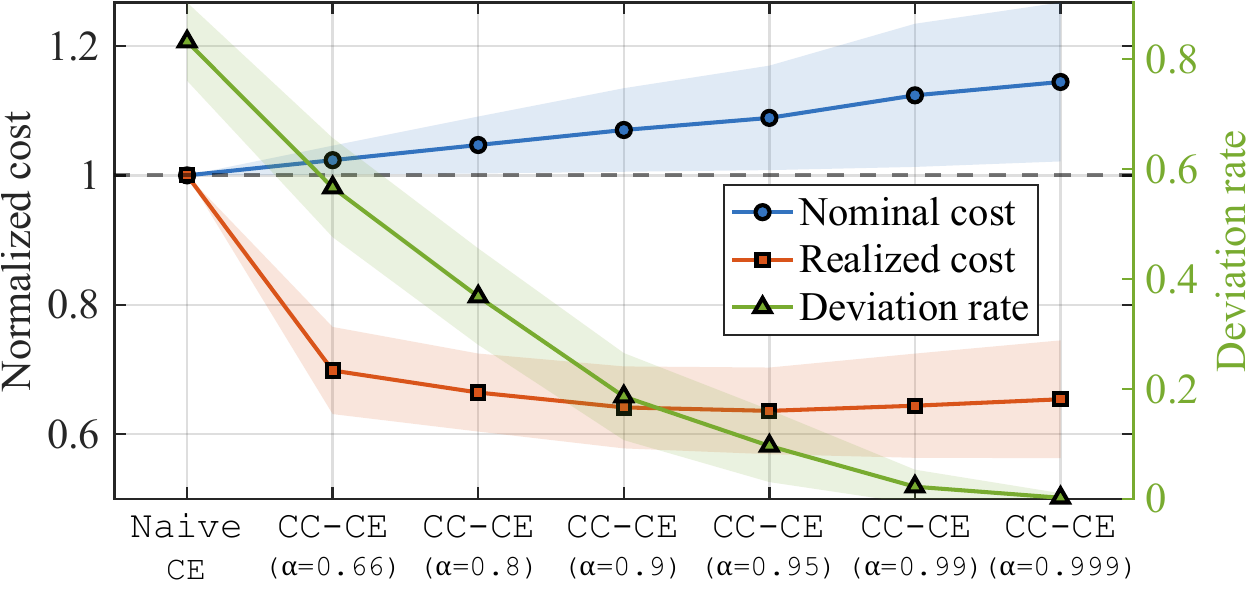}
    \caption{
    Effect of the confidence level $\alpha$ on coordination performance for the congestion parameter $\gamma=2.0$.
    Increasing $\alpha$ contracts the CC-CE feasible region and increases nominal cost, but reduces the deviation rate toward zero.
    The realized cost exhibits a non-monotonic trend, reflecting the effect of reduced deviations and increased conservativeness.
    Note that costs are normalized by their corresponding \texttt{Naive-CE} values, so nominal and realized costs should not be directly compared.
    }
    \label{fig:alphaTest}
\end{figure}

\subsection{Result: Robustness and confidence-level effects}

We first compare \texttt{Naive-CE} with \texttt{CC-CE} under different confidence levels.
As shown in \Cref{fig:alphaTest}, increasing $\alpha$ increases the nominal system cost due to contraction of the CC-CE feasible region, while the deviation rate decreases toward zero.

The realized system cost reflects the combined effect of these mechanisms.
Compared with \texttt{Naive-CE}, \texttt{CC-CE} reduces the realized system cost by 30--35\% at intermediate confidence levels.
However, overly large $\alpha$ eventually increases realized cost as feasible-set contraction dominates the benefit of reduced deviations.
This non-monotonic behavior is consistent with the confidence-level tradeoff in \Cref{rem:alpha_tradeoff}.

\begin{figure}[t!]
    \vspace{2mm}
    \centering
    \includegraphics[width=0.95\linewidth]{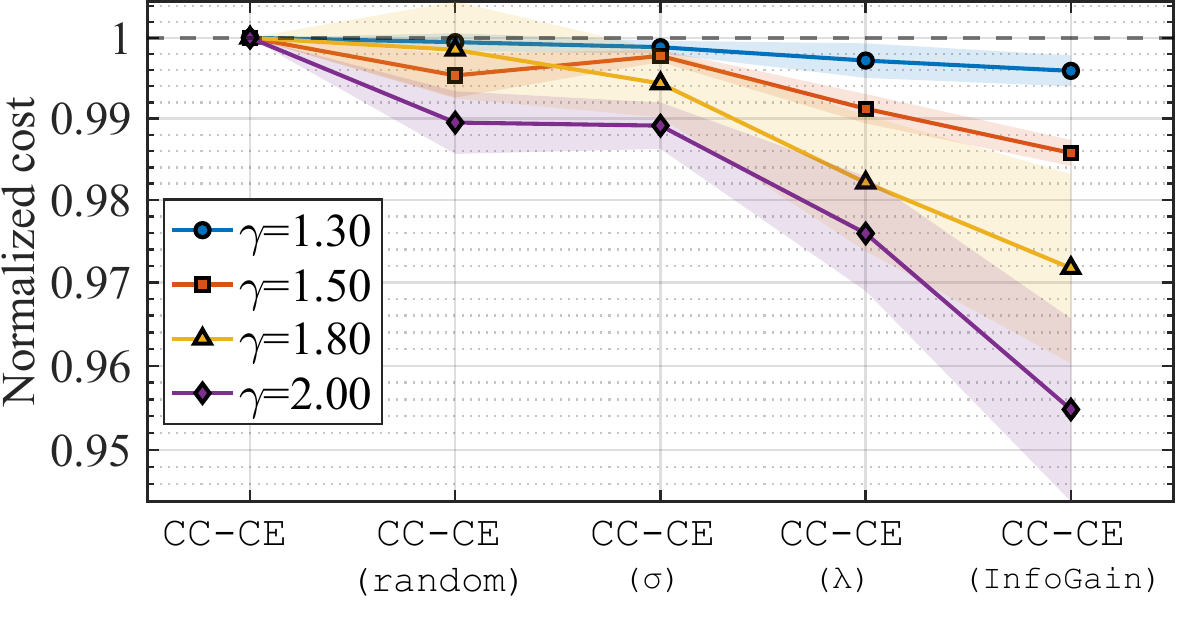}
    \caption{
    Information-acquisition performance across congestion regimes with $\alpha=0.9$.
    Each strategy selects five deviation constraints whose uncertainty margins are removed before resolving the CC-CE problem.
    Nominal costs are normalized by the baseline \texttt{CC-CE} cost for each $\gamma$.
    The proposed \texttt{InfoGain} strategy achieves the lowest normalized cost across all $\gamma$ values, with larger gains in higher-congestion regimes.}
    \label{fig:infoGainTest}
\end{figure}

\subsection{Result: Information acquisition}

We next examine whether the proposed \texttt{InfoGain} metric in \Cref{eq:infogain} identifies useful uncertainty sources to reduce.
After solving the baseline \texttt{CC-CE} problem, we remove the uncertainty margins from five selected deviation constraints and resolve the coordination problem.
We compare \texttt{random}, $\sigma$-only, $\lambda$-only, and \texttt{InfoGain} selection strategies.

As shown in \Cref{fig:infoGainTest}, \texttt{InfoGain} achieves the lowest normalized nominal cost across the tested congestion regimes.
The magnitude of improvement is scenario-dependent: the reduction is small for lower-$\gamma$ cases, but reaches approximately $4.5$\% reduction relative to the baseline \texttt{CC-CE} solution when $\gamma=2$.
The comparison with the $\sigma$-only and $\lambda$-only strategies shows that uncertainty magnitude or constraint sensitivity alone is insufficient.
Effective information acquisition requires their joint effect, together with the recommendation probability, as captured by \texttt{InfoGain}.

\section{Conclusion}

We studied uncertainty-aware coordination in multi-agent systems using chance-constrained correlated equilibria.
We showed that uncertainty contracts the feasible correlated-equilibrium region and derived sensitivity results that quantify how uncertainty affects coordination performance.
Using dual sensitivity analysis, we introduced an information-gain metric that identifies which uncertainty sources most strongly limit the achievable outcome.
We also analyzed the role of the confidence level $\alpha$, which improves robustness against deviations but may reduce coordination flexibility.

Numerical experiments confirmed this robustness--efficiency effect and showed that the proposed information-gain metric outperforms uncertainty-only and sensitivity-only information-acquisition strategies.
The present analysis relies on an agent-level Gaussian uncertainty model, which yields the quantile-based deterministic equivalent used in the local sensitivity characterization.
Future work includes extending the framework to deviation-specific, correlated, or non-Gaussian uncertainty structures and characterizing feasibility under large uncertainty levels.

\bibliographystyle{IEEEtran}
\bibliography{Ref}

\end{document}